\newtheorem{thm}{Theorem}[section]
\newtheorem{lem}{Lemma}[section]
\newtheorem{cor}{Corollary}[section]
\theoremstyle{definition}
\newtheorem{not1}{Note}[section]
\newtheorem{PE}{Definition}[section]
\newtheorem{UCO}[PE]{Definition}
\begin{document}
%
\title{Persistence based analysis of consensus protocols for dynamic graph networks}

\author{\IEEEauthorblockN{Nilanjan Roy Chowdhury}
\IEEEauthorblockA{Systems and Control Engineering\\
Indian Institute of Technology, Bombay\\
Mumbai, India 400076\\
Email: nilanjan@sc.iitb.ac.in}
\and
\IEEEauthorblockN{Srikant Sukumar}
\IEEEauthorblockA{Systems and Control Engineering \\
Indian Institute of Technology, Bombay\\
Mumbai, India 400076 \\
Email: srikant@sc.iitb.ac.in}}


%


\maketitle

\begin{abstract}
This article deals with the consensus problem involving agents with time-varying singularities in the dynamics or communication in undirected graph networks. Existing results provide control laws which guarantee asymptotic consensus. These results are based on the analysis of a system switching between piecewise constant and time-invariant dynamics. This work introduces a new analysis technique relying upon classical notions of persistence of excitation to study the convergence properties of the time-varying multi-agent dynamics. Since the individual edge weights pass through singularities and vary with time, the closed-loop dynamics consists of a non-autonomous linear system. Instead of simplifying to a piecewise continuous switched system as in literature, smooth variations in edge weights are allowed, albeit assuming an underlying persistence condition which characterizes sufficient inter-agent communication to reach consensus. The consensus task is converted to edge-agreement in order to study a stabilization problem to which classical persistence based results apply. The new technique allows precise computation of the rate of convergence to the consensus value.
\end{abstract}


%
\IEEEpeerreviewmaketitle

\section{Introduction} \label{intro}
Coordination and control of multi-agent systems is an active area of research in control theory. Control algorithms have been proposed in recent times to satisfy various cooperative control objectives, of which `consensus' forms an important subclass. Consensus is defined as a scenario in which multiple agents communicating with each other come to an agreement on a state value. The goal is to design an update law so that the vehicles in the network converge to the consensus value. Myriad applications of consensus control have been envisioned, some of which include formation control~\citep{fax2004information}, flocking~\cite{cucker2007emergent}, decentralized task assignment~\citep{alighanbari2005decentralized} and sensor networks~\citep{olfati2005consensus}.

                                                                                                                                 A multi-agent system is typically a collection of individual agents which share a common pre-specified objective. Each agent is assumed to be able to communicate with its neighbors via a network topology which usually depend on the nature of the sensor and the environment. Ideally the network topology is fixed with respect to time but most practical scenarios impose dynamically changing topologies. This may occur due to communication dropout, security reasons or intermittent actuation. 
                                                                                                                                           
                         Let us consider the modified single integrator dynamics with time varying gain~\citep{loria2005pe},
\begin{align}  
 \dot{x}_i &=g_i(t)u_i     \label{eq:tve}
   \end{align}              
where, $g_i(t)$ is a scalar time varying signal and satisfies the Persistence of Excitation (PE) condition~\citep [p. 72]{sastry2011adaptive}. The PE condition implies for scalar signals that although $g_i(t)$ might remain zero over several instants in time, there exists a window of time $ T $ over which the signal is active. Stability of dynamics in the form (\ref{eq:tve}) has been studied classically by Narendra~\citep{morgan1977stability} and in recent times by Loria et al.~\citep{loria2005pe} and Srikant and Akella~\citep{srikant2009persistence}. Narendra~\citep{morgan1977stability} has shown that the feedback law $ u = -g_i(t)x $ stabilizes the above dynamics if and only if $ g_i(t) $ is PE. In case of multi-agent systems, the time-varying scaling $g_i(t)$ can arise from the individual agent dynamics or represent the on-off nature of inter-agent communication. We consider diverse inter-agent communication topology by using $g_i(t)$ as the time-varying weight associated with edge $e_i$. The focus of this current work is to provide a Lyapunov based analysis of consensus algorithms for the aforementioned dynamics and establish a rate of convergence.\\
In~\citep{ren2007information},~\cite{ren2005survey} the authors propose a continuous-time update law for multi-agent systems. It is a well known fact that the second smallest eigenvalue of the graph laplacian matrix determines the rate of convergence in a static scenario. However in case of time-varying inter-agent weights this estimate doesn't hold true. In~\citep{ren2007information} the authors propose that the consensus control problem can also be solved by using nonlinear analysis or Lyapunov methodology. In~\citep{moreau2004stability}, the author draws attention to a non quadratic Lyapunov function to deal with the leaderless coordination problem. The focus of this article is to propose three different types of candidate Lyapunov functions for a multi-agent system communicating through a non-bidirectional and time dependent interaction topology. The authors of~\citep{slotine2005study} propose a modified partial contraction theory to resolve the group agreement and synchronization problem. In~\citep{moreau2005stability} the author imposes a condition on the communication network topology for convergence of the multi-agent system, and also proposes a new set valued Lyapunov function for the same. An amalgamation of system and graph theoretic approaches with the notion of convexity are utilized for the purpose of analysis. The conceptual framework for determining the convergence rates are further encouraged by the work of~\citep{brockett2000rate} and~\citep{ben2009stability}. In~\citep{brockett2000rate} the authors try to resolve the problem of convergence for degenerate descent procedures where as in~\citep{ben2009stability} the authors propose a new methodology for the analysis of a class of linear, degenerate gradient flow systems that is often involved in the domain of adaptive control and system identification.\\
Vicsek et al.~\citep{vicsek1995novel} formulated the consensus problem in discrete-time and proved that in absence of a central leader, all agents eventually move in the same direction. Authors of~\citep{jadbabaie2003coordination} have extended above to prove that all agents converge to a consensus value if the graph is jointly connected. In~\citep{tian2008consensus} the authors investigate the consensus problem in undirected graph networks of discrete time agents with delayed information and jointly connected topologies. Lin et al.~\citep{lin2011multi} carry out convergence analysis for single integrator with diverse time-delay, and jointly connected topologies by employing class of Lyapunov-Krasovskii functionals.\\
  A new systematic framework was proposed in~\cite{mesbahi2010graph},~\citep{zelazo2011edge} for solving the consensus problem in undirected graphs. A similarity transformation was defined in~\citep{zelazo2011edge} which relates the graph laplacian and edge laplacian matrix for any multi-agent system. The edge laplacian matrix and it's corresponding agreement protocol representation provides us a better understanding of the node agreement problem.  But the key idea defined in~\cite{mesbahi2010graph},~\citep{zelazo2011edge} is restricted to multi-agent systems with time invariant graph topology. \\
The discussion on reaching consensus of system with symmetric and diverse inter-agent communication topology can be further motivated through the work by~\citep{ren2005consensus}. The authors direct their attention to the problem of information consensus in the presence of limited information exchange and dynamically changing graph topology. Both continuous and discrete update laws are proposed to reach consensus. The key result therein shows that information consensus with minimal data availability can be achieved asymptotically if the union of the directed interaction graphs has a spanning tree. In addition the authors of~\citep{feng2011finite} extend the aforementioned idea to solve a finite time consensus problem. References~\citep{ren2007information},~\citep{feng2011finite} and~\citep{ren2005consensus} use ideas from switched systems stability and the graph laplacian to prove consensus for dynamically changing graph topologies. On the other hand, the primary focus of this paper is to utilize the edge laplacian (thus reducing the consensus problem to one of stabilization)~\citep{zelazo2011edge} and classical notions of persistence~\citep{morgan1977stability} to prove consensus for multi-agent systems with dynamic and undirected communication graphs.\\
In~\citep{arcak2007passivity} the author explicate a coordination problem and suggest a class of feedback laws that solve the above mentioned problem with local information. The author projects a passivity based approach to prove the asymptotic stability by constructing suitable Lyapunov function. It addresses bidirectional time-varying communication topology and employs the notion of persistence of excitation to prove the asymptotic stability. Though the author does not mention the convergence rate explicitly. However in this note, we are not only prove the exponential stability but deduce the convergence rate as well using the edge agreement protocol defined in~\cite{mesbahi2010graph},~\citep{zelazo2011edge}.

The article unfolds as follows. In section~\ref{Pre} a brief overview of graph theory and some preliminary ideas regarding Persistence of Excitation (PE) and stability theory are introduced. Our main result is proposed in section~\ref{mnrs} with accompanying proofs. Section~\ref{Sim} presents simulation result for a test problem. The conclusions of this work are summarized in section~\ref{conc}. 

\section{Preliminaries} \label{Pre}

\subsection{Graph Theory Terminology}
In this section some preliminary notions of graph theory are introduced. A detailed discussion of the same is given in~\citep{ren2007information},~\citep{mesbahi2010graph}. An undirected graph is a pair $(V,E)$ where 
$ V=\left\lbrace v_1,v_2,\cdots,v_n\right\rbrace  $ is a finite non empty node set and $E$ is an edge set. $ \left( i, j\right) \in E $ is an undirected edge if agents $ v_i $ and $ v_j $ exchange information with each other. In this case, edge $ v_iv_j $ is called incident with vertices $ v_i $ and $ v_j $. A path of length $ p $ in graph $ \mathcal{G} $ is given by a sequence of distinct vertices $ v_{i_0},v_{i_1},\cdots,v_{i_p} $ such that for $ k=\left\lbrace 0,1,\cdots (p-1)\right\rbrace $ the vertices $ v_{i_k} $ and $ v_{i_{ k+1 } } $ are adjacent. In this case, $ v_{i_o} $ and $ v_{i_p} $ are called end vertices of the path and $ v_{i_1},\cdots, v_{i_{p-1}} $ are called the inner vertices. When the vertices of the path are distinct except for its end vertices, the path is called a cycle. A graph is connected, if for every pair of vertices in $ V(\mathcal{G}) $, there is a path that has them as its end vertices. Any graph $ \mathcal{\tilde{G}} =\left( \tilde{V}, \tilde{E} \right) $ is a subgraph of $ \mathcal{G}=\left( V,E\right)  $ if $ \tilde{V}\subseteq V $ and $ \tilde{E}\subseteq E $. A graph without a cycle containing a single component is a tree. If for a subgraph $ V=\tilde{V} $, then it is referred to as a spanning subgraph. A spanning tree for a graph $ \mathcal{G} $ is thus a subgraph of $ \mathcal{G} $ that is also a tree. The incidence matrix $D(\mathcal G^o)$ of a graph $\mathcal G$ with node set $V=\left\lbrace v_1,v_2,....v_n\right\rbrace $, edge set $E=\left\lbrace e_1,e_2,\ldots,e_m\right\rbrace $ and arbitrary orientation 
$ \mathcal{O} $ is defined as,
\begin{align*}
D(\mathcal G^o)=\left[ d_{ij}\right]  
\end{align*}
where, \\
$\left[ d_{ij}\right]  = -1$ if $v_i$ is the tail of $e_j$\\
       $\left[ d_{ij}\right]  =  1$ if $v_i$ is the head of $e_j$\\
    $\left[ d_{ij}\right]  = 0$   otherwise\\
The arbitrary orientation doesn't effect the symmetric property of $ L(\mathcal{G}^o) $. Similarly, the graph laplacian matrix of an arbitrarily oriented graph $\mathcal G^o$ is defined as,
    \begin{align}
    L(\mathcal G^o)&=D(\mathcal G^o)D(\mathcal G^o)^T
    \end{align}
For a weighted graph the graph laplacian matrix is redefined as,
    \begin{align}
    L(\mathcal G^o)&=D(\mathcal G^o)WD(\mathcal G^o)^T
    \end{align}
    where, $W\in R^{m\times m}$ is the diagonal matrix with the weights $w(e_i), i=\left\lbrace 1,2,....m\right\rbrace $ on the diagonal entry. Similarly, the edge laplacian matrix is defined as,
    \begin{align}
    L_e(\mathcal G^o)=D(\mathcal G^o)^TD(\mathcal G^o)
    \end{align}
Since, the current article deals with undirected graphs the more cumbersome $ D(\mathcal{G}^0) $ notation will be exchanged for $ D(\mathcal{G}) $. For an undirected graph $L(\mathcal G)$ is symmetric which is not necessarily the case for a directed graph. The graph laplacian matrix is positive semi-definite with eigenvalues ordered as,
    \begin{align*}
    \lambda_1\leq\lambda_2\leq\cdots\leq\lambda_n
    \end{align*}
 where, $\lambda_1=0$. The graph $ \mathcal{G} $ is connected if and only if $\lambda_2>0$. For a directed and undirected graph $\lambda_2>0$ is defined as the algebraic connectivity and determines the convergence rate of the time invariant consensus algorithm.

\subsection{Fundamental notions of system theory}
The following classical notions of system theory are referred to, throughout the course of this paper.
\begin{PE} \label{PE1}
~\citep[p.72]{sastry2011adaptive} The signal $g(.):R\rightarrow R^{n\times m}$ is Persistently Exciting (PE) if there exist finite positive constants $\mu_1,\mu_2, T$ such that,
\begin{align}
\mu_2I_n\geq \int_t ^{t+T} g(\tau)g(\tau)^T d\tau\geq\mu_1I_n \hspace{1.5cm} \forall t\geq 0    \label{eq:pre2}
\end{align}
where, $ I_n $ is the identity matrix of dimension $ n $. In this work, whenever an assumption of persistent excitation is made on a time-dependent signal, the quantities $\mu_1$, $ \mu_2 $ and $T$ are not assumed to be explicitly known.
\end{PE}
\begin{UCO} \label{PE3}
~\citep[p.35]{sastry2011adaptive} The linear time-varying system $[A(t), C(t)] $ defined by,
\begin{align*}
\dot{x}(t) & = A(t)x(t)  \qquad x(0)=x_0 \\
y(t) & = C(t)x(t)
\end{align*}
where, $ x(t)\in R^n $ and $ y(t)\in R^m $, while $ A(t)\in R^{n\times n} $, $ C(t)\in R^{m\times n} $ are piecewise continuous functions, is called \emph{uniformly completely observable} (UCO) if there exist finite and strictly positive constants $ \beta_1, \beta_2, \delta $ such that, for all $ t_0 \geq 0 $ 
\begin{align*}
\beta_2I_n \geq \int_{t_0}^{t_0+\delta} \Phi_A^T (\tau,t_0)C^T(\tau)C(\tau)\Phi_A (\tau,t_0) d\tau \geq \beta_1I_n
\end{align*}
\end{UCO}
\begin{thm} \label{PE4}
~\citep[p.73-74]{sastry2011adaptive} Assume that, for all $ \delta>0 $, there exists $ K_{\delta} \geq 0$ such that for all, $ t_0 \geq 0 $,
\begin{align*}
\int_{t_0}^{t_0+\delta} \parallel K(\tau) \parallel^2 d\tau \leq K_{\delta} 
\end{align*} 
Then the system $ [A, C] $ is uniformly completely observable if and only if the system $ [A+KC, C] $ is uniformly completely observable. Moreover, if the observability gramian of the system $ [A, C] $ satisfies,
\begin{align*}
\beta_2I_n \geq \int_{t_0}^{t_0+\delta} \Phi_A^T (\tau,t_0)C^T(\tau)C(\tau)\Phi_A (\tau,t_0) d\tau \geq \beta_1I_n \quad \forall t_0\geq 0
\end{align*}
then the observability gramian of the system $ [A+KC, C] $ also satisfy the above mentioned inequalities with identical choice of $ \delta $ and, 
\begin{align*}
\tilde{\beta}_1 &= \frac{\beta_1}{\left( 1+ \sqrt{K_{\delta}\beta_2}\right)^2 } \\
\tilde{\beta}_2 &= \beta_2 \textit{e}^ {\left( K_{\delta} \beta_2 \right)} 
\end{align*}
\end{thm}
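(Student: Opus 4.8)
The plan is to pass to the time-domain input–output picture, since the observability gramian of $[A,C]$ over a window of length $\delta$ is precisely the quadratic form $x_0 \mapsto \int_{t_0}^{t_0+\delta} |C(\tau)\Phi_A(\tau,t_0)x_0|^2\,d\tau$. Fix $t_0$ and an initial state $x_0$, and set $y_1(\tau)=C(\tau)\Phi_A(\tau,t_0)x_0$ and $y_2(\tau)=C(\tau)\Phi_{A+KC}(\tau,t_0)x_0$, the free outputs of $[A,C]$ and $[A+KC,C]$. Then $x_0^{T} M_1 x_0 = \|y_1\|_{L^2[t_0,t_0+\delta]}^2$ and $x_0^{T} M_2 x_0 = \|y_2\|_{L^2[t_0,t_0+\delta]}^2$, where $M_1,M_2$ denote the two gramians, so it suffices to show that the two output $L^2$-norms are equivalent with the stated constants; the UCO equivalence and the gramian inequalities are then the same statement.

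First I would set up the Volterra (output-injection) identity. Writing $\dot\zeta = (A+KC)\zeta = A\zeta + K(C\zeta)$ and applying variation of constants about the flow of $A$ gives $\zeta(\tau)=\Phi_A(\tau,t_0)x_0+\int_{t_0}^{\tau}\Phi_A(\tau,s)K(s)C(s)\zeta(s)\,ds$; premultiplying by $C(\tau)$ and using $y_2(s)=C(s)\zeta(s)$ yields $y_2(\tau)=y_1(\tau)+\int_{t_0}^{\tau}C(\tau)\Phi_A(\tau,s)K(s)y_2(s)\,ds =: y_1(\tau)+(\mathcal K y_2)(\tau)$. The one estimate I really need is a bound on the Volterra operator $\mathcal K$ on $L^2[t_0,t_0+\delta]$. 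The crucial observation is that the given upper gramian bound, applied with initial time $s$, reads $\int_s^{s+\delta}|C(\tau)\Phi_A(\tau,s)v|^2\,d\tau\le\beta_2|v|^2$ for every vector $v$, i.e. it controls integration in the \emph{output} variable $\tau$. Integrating $\mathcal K$ in exactly that direction via Minkowski's integral inequality, then Cauchy–Schwarz and the hypothesis $\int_{t_0}^{t_0+\delta}\|K\|^2\le K_\delta$, gives $\|\mathcal K u\|_{L^2}\le\sqrt{\beta_2}\int_{t_0}^{t_0+\delta}\|K(s)\|\,|u(s)|\,ds\le\sqrt{\beta_2 K_\delta}\,\|u\|_{L^2}$.

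With this in hand the lower bound is immediate: from $y_1=(I-\mathcal K)y_2$ and the triangle inequality, $\|y_1\|_{L^2}\le(1+\sqrt{\beta_2 K_\delta})\|y_2\|_{L^2}$, whence $x_0^{T}M_2x_0=\|y_2\|^2\ge\|y_1\|^2/(1+\sqrt{K_\delta\beta_2})^2\ge\beta_1|x_0|^2/(1+\sqrt{K_\delta\beta_2})^2$, which is exactly $\tilde\beta_1$. For the upper bound I would avoid the naive estimate $\|y_2\|\le\|y_1\|/(1-\sqrt{\beta_2 K_\delta})$, valid only when $\beta_2 K_\delta<1$, and instead exploit the causal (triangular) structure of $\mathcal K$ through a Bellman–Gronwall argument applied to the truncated output energy $E(t)=\int_{t_0}^{t}|y_2|^2$; the accumulation of the factor $\beta_2\|K(s)\|^2$ along the window is what produces the exponential constant $\tilde\beta_2=\beta_2 e^{K_\delta\beta_2}$ with no smallness restriction. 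Finally, the equivalence follows by symmetry: since $(A+KC)+(-K)C=A$ and $\|-K\|=\|K\|$ share the same $K_\delta$, rerunning the argument for $[A+KC,C]$ with gain $-K$ recovers $[A,C]$, so each system is UCO precisely when the other is, with the same $\delta$.

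I expect the main obstacle to be the upper bound, specifically extracting the exact constant $e^{K_\delta\beta_2}$: the observability gramian bounds energy only in the output-time direction, so the estimate for $\mathcal K$ is inherently asymmetric (it sends pointwise input information to integrated output information), and one must organize the Bellman–Gronwall recursion so that this asymmetry, together with Volterra causality, yields the clean exponential rather than a bound that degenerates as $\beta_2 K_\delta\to 1$. The lower bound, by contrast, drops out cleanly from the operator-norm estimate and already reproduces $\tilde\beta_1$ exactly.
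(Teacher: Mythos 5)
A preliminary remark: the paper never proves Theorem~\ref{PE4}; the statement is transcribed, constants included, from \citep[p.73-74]{sastry2011adaptive}, and the citation is the entire argument as far as the paper is concerned. So your attempt has to be judged against the statement itself rather than an in-paper proof. Your route is the standard one, and the lower-bound half of your proof is correct and complete: the Volterra identity $y_2=y_1+\mathcal{K}y_2$ with $(\mathcal{K}u)(\tau)=\int_{t_0}^{\tau}C(\tau)\Phi_A(\tau,s)K(s)u(s)\,ds$, the estimate $\|\mathcal{K}u\|_{L^2}\le\sqrt{\beta_2}\int_{t_0}^{t_0+\delta}\|K(s)\|\,|u(s)|\,ds\le\sqrt{\beta_2K_\delta}\,\|u\|_{L^2}$ (Minkowski, the gramian bound restarted at time $s$, Cauchy--Schwarz), and the triangle inequality deliver exactly $\tilde\beta_1=\beta_1/(1+\sqrt{K_\delta\beta_2})^2$. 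As it happens, this is the only half of the theorem the paper actually uses downstream: only $\tilde\mu_1$ enters the decay estimate and the convergence rate in Theorem~\ref{main}.

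The genuine gap is the upper bound, exactly where you flagged the ``main obstacle,'' and it is not an organizational difficulty that a cleverer Bellman--Gronwall recursion can overcome: the constant $\tilde\beta_2=\beta_2e^{K_\delta\beta_2}$ is false as stated (whether the slip originates in the transcription or the reference is immaterial here). Take $n=1$, $A\equiv0$, $C\equiv1$, $K\equiv1$, $\delta=1$; then $\beta_1=\beta_2=1$ and $K_\delta=1$ are valid constants, while $[A+KC,C]$ is $\dot x=x$, $y=x$, whose gramian over a unit window is $\int_0^1e^{2s}\,ds=(e^2-1)/2\approx3.19$, exceeding $\beta_2e^{K_\delta\beta_2}=e\approx2.72$. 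The same example shows that the inequality Gronwall would need in order to output the clean exponential, namely $\int_{t_0}^{t}|y_2|^2\,d\tau\le\beta_2|x_0|^2+\beta_2\int_{t_0}^{t}\|K(s)\|^2\bigl(\int_{t_0}^{s}|y_2|^2\,d\tau\bigr)\,ds$, is itself false; and converting the $L^1$-type bound on $\mathcal{K}$ into an $L^2$ bound by Cauchy--Schwarz reinstates precisely the smallness condition $\beta_2K_\delta<1$ you were trying to avoid. What \emph{is} provable is the exponential bound with a degraded prefactor: with $x_2$ the state of $[A+KC,C]$ and $N(t)=\int_t^{t_0+\delta}\Phi_A(\tau,t)^TC(\tau)^TC(\tau)\Phi_A(\tau,t)\,d\tau$ the remaining-window gramian of $[A,C]$, the function $g(t)=x_2(t)^TN(t)x_2(t)$ satisfies $\dot g=-|y_2|^2+2x_2^TNKy_2\le\beta_2\|K\|^2g$ (Young's inequality, using $N\le\beta_2I$), hence $g(t)\le\beta_2|x_0|^2e^{\beta_2K_\delta}$; integrating $|y_2|^2=2x_2^TNKy_2-\dot g$ over the window and applying Cauchy--Schwarz then gives $\int_{t_0}^{t_0+\delta}|y_2|^2\,d\tau\le4\beta_2e^{K_\delta\beta_2}|x_0|^2$. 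So the theorem holds with $\tilde\beta_2=4\beta_2e^{K_\delta\beta_2}$ (the factor $4$ can be tightened, but the scalar example shows it cannot be reduced to $1$), and a correct write-up must either adjust the constant in this way or assume $\beta_2K_\delta<1$. Note finally that the gap also infects the ``if and only if'' claim: UCO of $[A+KC,C]$ requires \emph{some} finite upper gramian bound, which your symmetry remark presupposes rather than proves.
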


\begin{thm} \label{PE2}
~\citep[p.31-32]{sastry2011adaptive} Let, $ B_h $ be a closed ball of radius $ h $ centered at $ 0 $ in $ R^n $. If there exists a function $ v(t,x) $ and strictly positive constants $ \alpha_1, \alpha_2, \alpha_3, \delta $ such that for all $ x\in B_h $, $ t \geq 0 $
\begin{align*}
\alpha_1\|x\|^2\leq v(t,x)\leq\alpha_2\|x\|^2   
\end{align*}
\begin{align*}
\frac{dv(t,x (t))}{dt}\leq 0   
\end{align*}
\begin{align*}
\int_t^{t+\delta} \frac{dv(\tau, x(\tau)}{d\tau} d\tau \leq -\alpha_3\|x (t)\|^2   
\end{align*}
then, $ x(t) $ converges exponentially to $ 0 $. 
Further, $v(t,x)$ evolves according to,
\begin{align*}
v(t,x(t))\leq m_v \textit{e}^{-\alpha_v(t-t_0)} v(t_0,(x(t_0))\qquad t\geq t_0\geq0
\end{align*}
where,
\begin{align*}
m_v &=\frac{1}{\left( 1-\frac{\alpha_3}{\alpha_2}\right)} \\
\alpha_v &=\frac{1}{T} \ln \frac{1}{\left( 1-\frac{\alpha_3}{\alpha_2}\right)} 
\end{align*}
\end{thm}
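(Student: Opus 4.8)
The plan is to exploit the three hypotheses in sequence: the integral decrease condition turns $v$ into a strict contraction over each window of length $\delta$; the quadratic sandwiching converts this into geometric decay of $v$ at the sampling instants $t_0+k\delta$; and the monotonicity $\dot v\le 0$ lets me interpolate between samples to obtain a continuous exponential envelope, which finally transfers back to $\|x\|$.

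First I would apply the fundamental theorem of calculus to the integral hypothesis to write $v(t+\delta,x(t+\delta))-v(t,x(t))=\int_t^{t+\delta}\dot v\,d\tau\le -\alpha_3\|x(t)\|^2$. Invoking the upper bound $v\le\alpha_2\|x\|^2$, equivalently $\|x(t)\|^2\ge\alpha_2^{-1}v(t,x(t))$, then yields $v(t+\delta,x(t+\delta))\le\big(1-\tfrac{\alpha_3}{\alpha_2}\big)v(t,x(t))$. Writing $\rho:=1-\tfrac{\alpha_3}{\alpha_2}$, iterating over $k$ consecutive windows gives $v(t_0+k\delta,\cdot)\le\rho^{\,k}\,v(t_0,x(t_0))$.

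Next, for arbitrary $t\ge t_0$ I would write $t=t_0+k\delta+s$ with $k=\lfloor (t-t_0)/\delta\rfloor$ and $0\le s<\delta$. Because $\dot v\le 0$, $v$ is non-increasing, so $v(t)\le v(t_0+k\delta)\le\rho^{\,k}v(t_0)$. Using $k>(t-t_0)/\delta-1$ together with $\ln\rho<0$ gives $\rho^{\,k}\le\rho^{-1}e^{(t-t_0)(\ln\rho)/\delta}$, i.e. $v(t)\le m_v\,e^{-\alpha_v(t-t_0)}v(t_0)$ with $m_v=\rho^{-1}=\big(1-\tfrac{\alpha_3}{\alpha_2}\big)^{-1}$ and $\alpha_v=-\tfrac1\delta\ln\rho=\tfrac1\delta\ln\big(1-\tfrac{\alpha_3}{\alpha_2}\big)^{-1}$, matching the claimed constants once one identifies the window $T$ of the conclusion with $\delta$. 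Sandwiching once more, $\alpha_1\|x(t)\|^2\le v(t)\le m_v\alpha_2\,e^{-\alpha_v(t-t_0)}\|x(t_0)\|^2$, so $\|x(t)\|\le\sqrt{m_v\alpha_2/\alpha_1}\,e^{-(\alpha_v/2)(t-t_0)}\|x(t_0)\|$, which is exponential convergence to $0$.

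The main obstacle is less analytic than it is careful bookkeeping. I must ensure $0<\rho<1$ so that $\ln\rho$ is well defined and negative: consistency of the hypotheses forces $\alpha_3\le\alpha_2$ (since $v\ge0$ at $t+\delta$), and $\alpha_3>0$ is assumed, so $\rho\in[0,1)$ and the borderline case $\rho=0$ merely gives immediate decay. I also need the trajectory to remain inside $B_h$ for the hypotheses to apply at every $\tau$; since $v$ is non-increasing, $\alpha_1\|x(t)\|^2\le v(t_0)\le\alpha_2\|x(t_0)\|^2$ confines $x$ to a ball of radius $\sqrt{\alpha_2/\alpha_1}\,\|x(t_0)\|$, so restricting the initial condition to $\|x(t_0)\|\le h\sqrt{\alpha_1/\alpha_2}$ keeps the whole trajectory in $B_h$ and makes the estimate self-consistent. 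The only remaining care is the floor-function interpolation step and reconciling the notation $T$ in the conclusion with $\delta$ in the hypotheses.
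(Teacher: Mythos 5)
The paper states this theorem as a cited preliminary from Sastry--Bodson and gives no proof of its own, so there is no internal proof to compare against; your reconstruction is correct and is the standard argument used in the cited source. Specifically: the windowed decrease together with the upper quadratic bound yields the one-step contraction $v(t+\delta)\le\left(1-\tfrac{\alpha_3}{\alpha_2}\right)v(t)$, iteration plus the monotonicity $\dot v\le 0$ interpolates this to the continuous envelope with exactly the stated constants $m_v=\left(1-\tfrac{\alpha_3}{\alpha_2}\right)^{-1}$ and $\alpha_v=\tfrac{1}{\delta}\ln\left(1-\tfrac{\alpha_3}{\alpha_2}\right)^{-1}$ (correctly identifying the theorem's $T$ with the hypothesis window $\delta$, a notational slip in the statement), and the lower quadratic bound transfers the decay to $\|x(t)\|$. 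Your two caveats are also the right ones: consistency of the hypotheses forces $1-\tfrac{\alpha_3}{\alpha_2}\ge 0$ so the logarithm makes sense, and since the hypotheses only hold on $B_h$ one must confine initial conditions to $\|x(t_0)\|\le h\sqrt{\alpha_1/\alpha_2}$ so that the non-increasing $v$ keeps the trajectory inside $B_h$, which is the standard local qualification of this lemma.
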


\section{Main Result} \label{mnrs}
The single integrator dynamics for a class of multi-agent system is defined as~\citep{ren2008distributed}
\begin{align}
\dot{x}_i=u_i   \label{eq:mn21}
\end{align} 
with a feedback law of the following form,
\begin{align}
u_i=-k\sum_{j=1}^n a_{ij}(t)\left( x_i-x_j\right)\qquad i=1,2,\cdots,n  \label{eq:mn801}
\end{align}
Therefore, by combining the control law defined in (\ref{eq:mn801}) the overall control law depicted as,
\begin{align}
u(t) &=\begin{bmatrix}
u_1& u_2& \cdots& u_n
\end{bmatrix}^T    \nonumber  \\
&=-k L(\mathcal{G})x  \nonumber  \\
& =-kD(\mathcal G)W(t) D(\mathcal G)^T x \label{eq:mn17}
\end{align}
where, $ k>0 $, $D(\mathcal G)\in R^{ n\times m} $, is the incidence matrix for the corresponding arbitrary oriented graph and $W(t)\in R^{m\times m} $ is defined as a diagonal matrix with different $g_i^2 (t)$, $i=\left\lbrace 1,2,\ldots,m\right\rbrace $ on the diagonal entries representing the edge weights. This represents the notion of time-varying and diverse inter-agent communication topology.
Here, the diagonal matrix $W (t)\in R^{ m\times m} $ can be subdivided into two different block diagonal matrices as $ W(t)= diag \begin{bmatrix}
W_T(t),& W_C(t)
\end{bmatrix}$ 
where, $W_T (t)\in R^{p\times p}$ represents the weighting functions corresponding to the edges of the spanning tree and $W_C (t) \in R^{\left( m-p\right) \times \left( m-p\right)} $ represents the weights corresponding to the cycle edges. Here, $ p $ is the number of spanning tree edges in the given undirected graph which is equals $(n-1)$ for an undirected graph.
\begin{thm} \label{main}
The continuous update law proposed in equation (\ref{eq:mn17}) guarantees that the class of multi-agent systems with single-integrator dynamics (\ref{eq:mn21}) and time-varying communication topology characterized by $ W(t) $ achieves consensus exponentially if $ W_T(t) $ is persistently exciting. (Definition~\ref{PE1}).
\end{thm}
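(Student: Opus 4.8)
The plan is to turn the consensus objective into an exponential stabilization problem via the edge-agreement representation and then invoke the persistence-based results of Theorems~\ref{PE4} and~\ref{PE2}. First I would partition the incidence matrix as $D(\mathcal G) = [\,D_T,\ D_C\,]$, where $D_T$ collects the $p=n-1$ spanning-tree edges and $D_C$ the cycle edges. Since the tree edges span the cut space, there is a constant matrix $R$ with $D_C = D_T R$, whence $L(\mathcal G) = D_T\big(W_T(t) + R\,W_C(t)\,R^T\big)D_T^T$. Introducing the reduced edge state $z_T = D_T^T x$ and using $\dot x = -kL(\mathcal G)x$ produces the linear time-varying system $\dot z_T = -k\,L_{e,T}\big(W_T(t)+R\,W_C(t)\,R^T\big)z_T$, where $L_{e,T}=D_T^T D_T \succ 0$ because $D_T$ has full column rank. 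As the tree connects every vertex, $z_T = 0$ is equivalent to $x\in\mathrm{span}(\mathbf 1)$, so exponential convergence $z_T\to 0$ is exactly exponential consensus; I would also record that $\mathbf 1^T x$ is invariant, which pins the consensus value to the initial average.

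Next I would apply Theorem~\ref{PE2} with the candidate $v(t,z_T)=\tfrac12 z_T^T L_{e,T}^{-1} z_T$. The quadratic bounds hold with $\alpha_1 = 1/(2\lambda_{\max}(L_{e,T}))$ and $\alpha_2 = 1/(2\lambda_{\min}(L_{e,T}))$, and a direct computation gives $\dot v = -k\,z_T^T\big(W_T+R\,W_C\,R^T\big)z_T \le -k\,z_T^T W_T z_T \le 0$, so the first two hypotheses of Theorem~\ref{PE2} are immediate. The whole difficulty concentrates in the third (integral) hypothesis $\int_t^{t+\delta}\dot v\,d\tau \le -\alpha_3\|z_T(t)\|^2$: since the weights $g_i^2(t)$ pass through zero, $\dot v$ vanishes on entire subintervals, so the bound must be recovered from accumulated excitation over a window rather than pointwise.

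To close this step I would recast it as a uniform complete observability statement. Writing the dynamics as $\dot z_T = A(t)z_T$ with output matrix $C(t)=\left[\begin{smallmatrix} W_T^{1/2}(t)\\ W_C^{1/2}(t)R^T\end{smallmatrix}\right]$, so that $\|C z_T\|^2 = z_T^T(W_T+R\,W_C\,R^T)z_T$ and hence $\int_t^{t+\delta}\dot v\,d\tau = -k\,z_T(t)^T\big[\int_t^{t+\delta}\Phi_A^T C^T C\,\Phi_A\,d\tau\big]z_T(t)$, the required estimate is exactly UCO of $[A(t),C(t)]$ (Definition~\ref{PE3}). The clean route is to note that $A(t)=K(t)C(t)$ with the \emph{bounded} injection $K(t) = -k\,L_{e,T}\big[\,W_T^{1/2}(t),\ R\,W_C^{1/2}(t)\,\big]$; Theorem~\ref{PE4} then makes UCO of $[A,C]$ equivalent to UCO of $[0,C]$, whose observability gramian is simply $\int_t^{t+\delta}\big(W_T(\tau)+R\,W_C(\tau)\,R^T\big)d\tau \ge \int_t^{t+\delta} W_T(\tau)\,d\tau \ge \mu_1 I_p$ by persistence of excitation of $W_T$ (taking $\delta=T$, with the upper bound following from boundedness of the weights). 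This yields $\alpha_3 = k\beta_1 > 0$, and Theorem~\ref{PE2} then delivers exponential decay of $z_T$, hence of the disagreement, with the explicit rate $\alpha_v = \tfrac1T\ln\big[1/(1-\alpha_3/\alpha_2)\big]$.

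The step I expect to be the genuine obstacle is this integral/UCO condition, and within it the fact that only the tree weights $W_T$ are assumed persistently exciting while the cycle weights $W_C$ still steer the trajectory through $\Phi_A$. The resolution above---folding $W_C$ into the output $C$ so that the injection $K$ remains bounded while the cycle contribution can only enlarge the gramian---is the crux of the argument; without it one is tempted to invert $W_T^{1/2}$ to cancel the cycle term in $A$, which fails precisely at the singular instants where $g_i(t)=0$. A careful verification that $R$ is well defined and that all edge weights are uniformly bounded (so that $\int_{t_0}^{t_0+\delta}\|K(\tau)\|^2\,d\tau$ is finite, as Theorem~\ref{PE4} requires) would complete the plan.
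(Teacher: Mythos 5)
Your proposal is correct and follows essentially the same route as the paper: reduce to the spanning-tree edge dynamics $\dot x_T=-k\,\tilde L_e(\mathcal G_T)\bigl[W_T+TW_CT^T\bigr]x_T$, use the $\tilde L_e(\mathcal G_T)^{-1}$-weighted quadratic Lyapunov function, fold all the weights into the output so the injection $K(t)$ is bounded, and invoke Theorem~\ref{PE4} (output-injection invariance of UCO) plus Theorem~\ref{PE2} to get the exponential rate. The only differences are cosmetic (you skip the paper's eigendecomposition $\tilde L_e(\mathcal G_T)=\Gamma\Lambda\Gamma^T$ and the change of variables $y=\Gamma^Tx_T$, which leave the argument unchanged), and the boundedness caveat you flag at the end is likewise implicit in the paper's use of $\mu_2$ for the full weight matrix $W(t)$.
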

\begin{proof}
We begin with the interconnection between the edge laplacian and the edge agreement protocol as defined in~\citep{zelazo2011edge} in order to prove the aforementioned theorem. Consider the following similarity transformation to the edge states,
\begin{align}
x_e =D(\mathcal G)^T x \label{eq:mn5}
\end{align}
Differentiating (\ref{eq:mn5}) leads to the following expression,  
\begin{align}
\dot{x}_e &= D(\mathcal G)^T\dot{x}  \nonumber \\
 &= -kD(\mathcal G)^TD(\mathcal G)W(t)D(\mathcal G)^T x \nonumber \\
 &= - k\tilde{L}_e(\mathcal G)W(t) x_e \label{eq:mn6}
\end{align}
where the notation $ \tilde{L}_e(\mathcal{G}) $ is used to represent the time-invariant edge laplacian matrix. Now, instead of dealing with node agreement our interest shifts to the edge agreement problem since it deals with the classical stabilization problem. Let there be a connected graph $\mathcal G$ that can be described as the union of two sub-graphs as $\mathcal{G}_T \cup \mathcal{G}_C$, where $\mathcal{G}_T$ represents the spanning tree of $\mathcal G $ and $\mathcal{G}_C$ represents the cycle edges. Using an appropriate permutation of the edge indices we can partition the incidence matrix of $\mathcal G$ as,
\begin{equation}
D(\mathcal G)=\begin{bmatrix}
D(\mathcal G_T) & D(\mathcal G_C)         \label{eq:mn9}
\end{bmatrix}                  
\end{equation}
It is useful to represent the edge-laplacian matrix defined in equation (\ref{eq:mn6}) in terms of this new permutation,
\begin{align}
\tilde{L}_e(\mathcal G) &=\begin{bmatrix}
D(\mathcal G_T) & D(\mathcal G_C)
\end{bmatrix}^T \begin{bmatrix}
D(\mathcal G_T) & D(\mathcal G_C)
\end{bmatrix}        \nonumber \\
&=\begin{bmatrix}
D(\mathcal G_T)^TD(\mathcal G_T)& D(\mathcal G_T)^TD(\mathcal G_C)\\
D(\mathcal G_C)^TD(\mathcal G_T) & D(\mathcal G_C)^TD(\mathcal G_C)
\end{bmatrix}            \nonumber    \\
&=\begin{bmatrix}
\tilde{L}_e(\mathcal G_T)& D(\mathcal G_T)^TD(\mathcal G_C)\\
D(\mathcal G_C)^TD(\mathcal G_T)&\tilde{L}_e(\mathcal G_C)
\end{bmatrix}      \label{eq:mn19}
\end{align}
The matrix $W (t)\in R^{ m\times m} $ is also partitioned into two block diagonal matrices as follows,
\begin{align}
W (t)=\begin{bmatrix}
W_T (t)&0\\
0&W_C (t)
\end{bmatrix}    \label{eq:mn20}
\end{align}
where, $W_T (t)\in R^{p\times p} $ and $W_C (t) \in R^{\left( m-p\right) \times \left( m-p\right)} $ are mentioned as before. The edge state vector can be identically partitioned as,
\begin{align}
x_e=\begin{bmatrix}
x_T\\
x_C
\end{bmatrix}     \label{eq:mn10}
\end{align}
The columns of the cycle edges $D(\mathcal G_C)$ are linearly dependent on the columns of $D(\mathcal G_T)$. This relationship can be expressed as follows,
\begin{align}
D(\mathcal G_T)T=D(\mathcal G_C)                                  
\end{align}
where, the matrix $ T $ is defined as,
\begin{equation}
T=\left( D(\mathcal G_T)^T D(\mathcal G_T)\right) ^{-1} D(\mathcal G_T)^TD(\mathcal G_C)
\end{equation}     
Substituting (\ref{eq:mn19}) and (\ref{eq:mn20}) in (\ref{eq:mn6}) we have,
\begin{align}
\dot{x}_e &=-k\begin{bmatrix}
\tilde{L}_e(\mathcal G_T)& D(\mathcal G_T)^TD(\mathcal G_C)\\
D(\mathcal G_C)^TD(\mathcal G_T)&\tilde{L}_e(\mathcal G_C)
\end{bmatrix} \begin{bmatrix}
W_T (t)&0\\
0&W_C(t)
\end{bmatrix} x_e 
\end{align}
Therefore, the states corresponding to the spanning tree and cycle edges evolve according to,
\begin{align}
\dot{x}_T &=-k\tilde{L}_e(\mathcal G_T)W_T (t)x_T- kD(\mathcal G_T)^TD(\mathcal G_C)W_C (t)x_C  \\ \label{eq:mn7}
\dot{x}_C &=-kD(\mathcal G_C)^TD(\mathcal G_T)W_T (t)x_T-k\tilde{L}_e(\mathcal G_C)W_C (t)x_C
\end{align}
Here, we are strictly interested in the behavior of the edges corresponding to the spanning tree, since they represent the minimal edge subset that must go to zero for consensus to be achieved. The cycle edges can be reconstructed from the spanning tree edges as follows,
\begin{align}
x_C(t)=T^Tx_T(t)
\end{align}
With the aforementioned transformation, the $ x_T $ dynamics reduces to,
\begin{align}
\dot{x}_T &=-k\tilde{L}_e(\mathcal G_T)W_T(t)x_T- kD(\mathcal G_T)^TD(\mathcal G_C)W_C(t)x_C  \nonumber \\
& = -k\tilde{L}_e(\mathcal G_T)\left[ W_T(t)+TW_C(t)T^T\right] x_T     \nonumber \\
& =-k\tilde{L}_e(\mathcal G_T)\begin{bmatrix} 
I & T
\end{bmatrix} \begin{bmatrix}
W_T(t)&0\\
0&W_C(t)
\end{bmatrix} \begin{bmatrix}
I\\
T^T
\end{bmatrix}     
x_T               \nonumber \\
& = -k\tilde{L}_e(\mathcal G_T) RW(t)R^Tx_T    \label{eq:mn8}
\end{align}
where, $ R=\begin{bmatrix}
I & T
\end{bmatrix} $. It is evident that $\tilde{L}_e(\mathcal{G}_T) $ is symmetric and positive definite as long as there exists a spanning tree. It can therefore be decomposed as 
$\tilde{L}_e(\mathcal G_T)=\Gamma \Lambda \Gamma^T$, substituting into the spanning tree equation described in (\ref{eq:mn8}) yields,
\begin{equation}
\dot{x}_T=-k\Gamma \Lambda \Gamma^TRW(t)R^Tx_T
\end{equation}
We now introduce a set of modified states defined by the similarity transformation 
$y=\Gamma^Tx_T $ with dynamics,
\begin{align}
\dot{y} = \Gamma^T\dot{x}_T = - k\Lambda \Gamma^T RW(t)R^T \Gamma y \label{eq:mn12}     
\end{align}
Therefore, we need to establish the exponential convergence of (\ref{eq:mn12}) instead of (\ref{eq:mn8}). In the preceding analysis, $\Lambda$ is a diagonal matrix with positive entries. It contains the nonzero eigenvalues of the edge laplacian matrix associated with the spanning tree. 

For the proof of exponential convergence we define a Lyapunov-like candidate function,
\begin{align}
& V\left(y\right) =y^T \Lambda^{-1} y  
\end{align}
Where, $ \Lambda^{-1} $ is a diagonal matrix with positive eigenvalues. The time derivative of $ V\left( y\right)  $ along closed-loop dynamics~(\ref{eq:mn12}) can be written as,
\begin{align}
\Rightarrow \dot{V}\left(y\right) & = y^T\Lambda^{-1}\dot{y}+\dot{y}^T\Lambda^{-1}y \nonumber \\
& = y^T \Lambda^{-1} \left(-k\Lambda\Gamma^TRW(t)R^T\Gamma\right) y \nonumber  \\ & + y^T \left(-k\Gamma^TRW(t)R^T\Gamma\Lambda^T\right)\Lambda^{-1} y   \nonumber \\
& = -2ky^T \left(\Gamma^TRW(t)R^T\Gamma\right) y 
\end{align}
Integrating both sides we get,
\begin{align}
& \int_t^{t+T} \dot{V}\left(\sigma,y\right) d\sigma = -2k\int_t^{t+T} y^T(\sigma)\Gamma^TRW(\sigma)R^T\Gamma  y(\sigma) d\sigma \nonumber \\
&= - 2k\int_t^{t+T} y^T(\sigma)\left( \Gamma^TRW(\sigma)^{\frac{1}{2}}\right) \left( W(\sigma)^{\frac{1}{2}} R^T \Gamma\right) y(\sigma) d\sigma   \label{eq: mn22}
\end{align}

\begin{lem}
$W(t)^{1/2} $ is PE $\Rightarrow$ $ RW(t)^{1/2} $ is PE $\Rightarrow$ $ \Gamma^T RW(t)^{1/2} $ is PE.
\end{lem}
\begin{proof}
We begin by establishing persistence of $RW(t)^{1/2}$.
\begin{align}
\int_t^{t+T}RW(\sigma)R^T d\sigma &= \int_t^{t+T}\left[ W_T(\sigma)+TW_C(\sigma)T^T\right] d\sigma  \label{eq:mn30}
\end{align}
Consider any vector $\alpha$ such that $\|\alpha\|=1$,

\begin{align*}
& \alpha^T \left\lbrace \int_t^{t+T}\left[ W_T(\sigma)+TW_C(\sigma)T^T\right] d\sigma\right\rbrace \alpha \nonumber \\
& = \alpha^T diag \left[ \int_t^{t+T}g_1^2(\sigma),\cdots,\int_t^{t+T}g_p^2(\sigma) d\sigma\right] \alpha  \\ & +\left( T^T \alpha\right)^T diag \left[ \int_t^{t+T}g_1^2(\sigma),\cdots,\int_t^{t+T}g_{m-p}^2(\sigma) d\sigma\right] \left( T^T \alpha\right)   \\
& \geq \mu \alpha^T \alpha + \mu \left( T^T \alpha\right)^T \left( T^T \alpha\right) 
\end{align*}
where we have utilized persistence of the signal $W(t)$ to arrive at lower bounds on the quadratic form.
\begin{not1}
In the above inequality $\mu >0$ and $ \|\alpha\|=1 $ . Therefore it can be easily concluded that $\mu \alpha^T \alpha = \mu$. On the other hand, $\mu \left( T^T \alpha\right)^T \left( T^T \alpha\right) \geq 0  $.
\end{not1}
Hence from the above mentioned analysis the conclusion can be rephrased as,
\begin{align}
\alpha ^T \left\lbrace \int_t^{t+T}  RW(\sigma)R^T d\sigma\right\rbrace \alpha  &\geq \mu \alpha^T \alpha + \mu \left( T^T \alpha\right)^T \left( T^T \alpha\right)   
\geq \mu
\end{align}
It is also evident using similar arguments, that a similarity transform does not impact the persistence of a signal and therefore $\Gamma^T R W(t) R^T \Gamma$ is also persistently exciting with identical $T$ and $\mu$.
\end{proof}

Therefore, from the PE condition defined in definition~\ref{PE1} and definition~\ref{PE3}, it is evident that $\left[ 0,W(t)^{\frac{1}{2}} R^T \Gamma \right] $ is UCO. 
Define $ K(t)=-k\Lambda\Gamma^TRW^\frac{1}{2}(t) $. The integral of $ K(t) $ over a window of time $ T $ can be evaluated as,
\begin{align*}
\int_{t_0}^{{t_0}+T} \parallel K(\sigma) \parallel^2 d\sigma &= \int_{t_0}^{{t_0}+T} k\parallel \Lambda \Gamma^T RW(\sigma)^{\frac{1}{2}} d\sigma \parallel^2   \\
& \leq k\parallel \Lambda \parallel^2 \left\lbrace tr \int_{t_0}^{{t_0}+T} \Gamma^T RW(\sigma) R^T \Gamma d\sigma\right\rbrace   \\
& \leq k\parallel\Lambda\parallel^2 p \mu_2
\end{align*}
where, $ p $ is the dimension of $\Gamma^T RW(t) R^T \Gamma $ denotes the number of spanning tree edges. Then, by theorem~\ref{PE4} the system $\left[ 0,W(t)^{\frac{1}{2}} R^T \Gamma \right] $ is UCO if and only if $\left[ -k\Lambda (\Gamma^TRW(t)^{\frac{1}{2}}) (W(t)^{\frac{1}{2}} R^T \Gamma), W(t)^{\frac{1}{2}} R^T \Gamma\right] $ is UCO. The observability gramian for this modified system with, $ A(t) =-k\Lambda\Gamma^TRW(t)R^T\Gamma $ is as follows,
\begin{align}
\tilde{\mu}_2 I_n \geq \int_{t_0}^{t_0+\delta} \Phi_A^T (\tau,t_0)C^T(\tau)C(\tau)\Phi_A (\tau,t_0) d\tau \geq \tilde{\mu}_1 I_n  \label{eq:mn25}
\end{align}
where,
\begin{align*}
\tilde{\mu}_1 & =\frac{\mu_1}{\left( 1+k\sqrt{p} \parallel\Lambda\parallel \mu_2 \right)^2}   \\
\tilde{\mu}_2 & = \mu_2 \textit{e}^ {k p \parallel\Lambda \parallel^2 \mu_2^2} 
\end{align*}
Therefore, for all $ t\geq t_0 $, the integral defined in equation (\ref{eq: mn22}) evaluates to,
\begin{align} 
& \int_{t_0}^{{t_0} +T} \dot{V}(\sigma) d\sigma = \nonumber \\
& -2ky^T(0) \left\lbrace  \int_{t_0}^{{t_0} +T} \Phi_A^T (\sigma,t_0)\Gamma^TRW(\sigma)R^T\Gamma\Phi_A (\sigma,t_0) d\sigma \right\rbrace  y(0)  
\end{align}
which by applying the modified UCO condition~\ref{PE4} yields,
\begin{align}
\Rightarrow \int_{t_0}^{{t_0}+T} \dot{V}(\sigma) d\sigma & \leq -\frac{2k \lambda_{min} (\Lambda) \mu_1}{\left( 1+k\sqrt{p} \parallel \Lambda\parallel\mu_2 \right)^2} V(y_0) 
     \label{eq: mn24}
\end{align}
Now comparing the result in equation (\ref{eq: mn24}) to the exponential stability theorem~\ref{PE2}, we have,
\begin{align*}
\frac{\alpha_3}{\alpha_2} &= \frac{2k \lambda_{min} (\Lambda) \mu_1}{\left( 1+k\sqrt{p} \parallel \Lambda \parallel \mu_2 \right)^2}    \\
m_v &=\frac{1}{\left( 1-\frac{\alpha_3}{\alpha_2}\right)}= \frac{1}{\left[ 1-\frac{2 k\lambda_{min} (\Lambda) \mu_1}{\left( 1+k\sqrt{p} \parallel \Lambda \parallel \mu_2 \right)^2}\right] }  \\
\alpha_v &=\frac{1}{T} \ln \frac{1}{\left( 1-\frac{\alpha_3}{\alpha_2}\right)}=\frac{1}{T} \ln \frac{1}{\left[ 1-\frac{2k \lambda_{min} (\Lambda) \mu_1}{\left( 1+k\sqrt{p} \parallel \Lambda \parallel \mu_2 \right)^2}\right] }  
\end{align*} 
and the explicit solution can be rephrased as,
\begin{align}
V(y(t)) & \leq m_v \textit{e}^{-\alpha_v(t-t_0)} V(y_0)  \nonumber \\
\Rightarrow \parallel y(t)\parallel & \leq \left\lbrace \sqrt{\frac{\lambda_{max} (\Lambda) m_v}{\lambda_{min} (\Lambda)}}\right\rbrace  \textit{e}^{-\frac{\alpha_v}{2} (t-t_0)}\parallel y(0)\parallel
\end{align}

Hence the convergence rate is determined as follows,
\[
\boxed{\dfrac{\alpha_v}{2} ={\dfrac{1}{2T} \ln \frac{1}{\left[ 1-\frac{2k \lambda_{min} (\Lambda) \mu_1}{\left( 1+k\sqrt{p} \parallel \Lambda \parallel \mu_2 \right)^2}\right]  }}}
\]
\begin{not1}
Here $ y(t) $ and $ x_T(t) $ are related to each other via the similarity transformation, $ y(t)=\Gamma^Tx_T(t) $ where, $ \Gamma $ is full column rank. It is a well known fact that the convergence rate doesn't change under a linear transformation. Hence, the convergence rate remains unaltered.
\end{not1}
\end{proof} 
We wish to influence the rate of convergence solely by changing the value of gain $  k$. For a time invariant graph network, arbitrarily pushing up the value of $ k $ improves the rate of convergence. However, this does not hold true in the time-varying scenario. In the time-varying case, even though $ k $ can be used to increase the convergence rate to an extent, arbitrarily increasing $ k $ does not guarantee large rates of exponential convergence. This is evident from the expression of the convergence rate, which has $ k $ in both the numerator and denominator. The effect of this scalar gain on the convergence rate will be clarified through simulations later.

\begin{cor}
~\citep [p.45-46] {ren2008distributed} Let $ t_1, t_2,\cdots $ be the infinite time sequence such that $ \tau_i=t_{i+1}-t_i $, $ i=0,1, \cdots $. Let $\mathcal{G}_n(t_i) $ be the undirected graph at time $ t=t_i $ with non-negative edge weights. Continuous time algorithm (\ref{eq:mn17}) achieves consensus asymptotically if there exists an infinite sequence of contiguous, nonempty, uniform bounded, time intervals $\left[ t_{ij}, t_{i j+1 }\right)$; $ j=1,2,\cdots $ starting at $ t_{i1}=t_0 $ with the property that the union of the undirected graph across each such interval has the same spanning tree. 
\end{cor}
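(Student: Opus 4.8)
The plan is to deduce this corollary directly from Theorem~\ref{main} by recognizing that the jointly-connected hypothesis is nothing but a combinatorial sufficient condition guaranteeing persistence of excitation of $W_T(t)$. First I would fix the common spanning tree promised by the hypothesis: since the union of the undirected graphs over every interval $[t_{ij},t_{ij+1})$ shares the \emph{same} spanning tree, there is a fixed set of $p=n-1$ edges belonging to all of these unions. Reordering the edge indices so that these $p$ edges occupy the leading block, the weighting matrix partitions as $W(t)=\mathrm{diag}[W_T(t),W_C(t)]$, exactly as in the construction preceding Theorem~\ref{main}, with $W_T(t)$ carrying the activation signals $g_i^2(t)$ of the shared tree edges. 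This reduces the entire task to verifying the hypothesis of Definition~\ref{PE1} for $W_T(t)$.

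Next I would establish the persistence window and the lower bound. By uniform boundedness of the intervals there is a finite $T^{*}$ with $t_{ij+1}-t_{ij}\le T^{*}$ for all $j$, and by contiguity and nonemptiness these intervals tile $[t_0,\infty)$; hence any window of length $T:=2T^{*}$ necessarily contains at least one complete interval $[t_{ij},t_{ij+1})$. Over such a complete interval the union of the active graphs contains every tree edge, so each weight $g_i^2(\cdot)$ with $i\le p$ is strictly positive on a subset of positive measure, whence $\int_{t}^{t+T}g_i^2(\sigma)\,d\sigma>0$. Assembling the per-edge integrals yields $\int_t^{t+T}W_T(\sigma)\,d\sigma\ge \mu_1 I_p$ for all $t\ge t_0$, which is the right-hand inequality of Definition~\ref{PE1}; the matching upper bound $\int_t^{t+T}W_T(\sigma)\,d\sigma\le \mu_2 I_p$ is immediate from the boundedness of the edge weights $g_i(t)$.

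With $W_T(t)$ certified persistently exciting, Theorem~\ref{main} applies verbatim and forces the spanning-tree edge states $x_T$, and therefore the node disagreement, to decay exponentially, which in particular delivers the asymptotic consensus asserted here; indeed one obtains the explicit exponential rate computed in the proof of Theorem~\ref{main} rather than merely asymptotic convergence.

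The hard part will be upgrading the purely set-theoretic statement ``the union contains the tree edge'' into the \emph{uniform} quantitative floor $\mu_1>0$ demanded by persistence. Membership in the union only asserts that an edge is active at some instant of the interval, whereas Definition~\ref{PE1} requires a fixed positive bound on $\int g_i^2$ over \emph{every} window, independent of $j$. Ruling out the degenerate scenario in which, as $j\to\infty$, the shared tree edges are active only on intervals of vanishing measure (or with vanishing weight) is precisely where the argument needs care; the natural remedy is to read ``uniformly bounded'' as supplying both an upper and a strictly positive lower bound on the activation of each shared tree edge per interval (a dwell-time type condition), which renders $\mu_1$ uniform. Absent such uniformity one can still recover asymptotic---though not uniformly exponential---consensus through a limiting argument on the non-increasing Lyapunov function $V(y)=y^T\Lambda^{-1}y$ combined with the joint connectivity of the sampled graphs.
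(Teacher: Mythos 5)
Your proposal takes essentially the same route as the paper: fix the edge set (treating absent edges as permanently zero-weight), reduce the joint-spanning-tree hypothesis to persistence of excitation of $W_T(t)$ over a window of length roughly $2t_{max}$, and then invoke Theorem~\ref{main} for convergence. The uniformity issue you flag as "the hard part" --- upgrading "each tree edge is active somewhere in every interval" to a $j$-independent floor $\mu_1>0$ in Definition~\ref{PE1} --- is genuine, but it is equally present in the paper, whose own proof simply asserts that integrating $W_T(t)$ over any window "satisfies a positive definite lower bound" without supplying the dwell-time/weight-floor argument you sketch, so your treatment is, if anything, the more careful of the two.
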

Firstly, we can assume that all graphs are complete (i.e. all nodes are in the neighborhood of the other). If not then the weights corresponding to those edges are permanently set to zero. This ensures that at each instant in time the number of edge states stay constant. Now, it only needs to be proven that the conditions in the corollary imply that $W_T(t)$ is persistently exciting. We have assumed that the all intervals are uniformly bounded. Let this be uniform bound be $t_{max}$. Further, union of graphs in each contiguous interval contain the same spanning tree. This implies that the spanning tree edge states do not change between intervals. This is sufficient because we are only concerned with the convergence of the spanning edge states.\\
We now choose $T > 2t_{max}$ in Definition~\ref{PE1}. If we integrate $W_T^2(t)$ over any window of time $T$ we can show that it satisfies a positive definite lower bound as required by (\ref{eq:pre2}). This proves PE of $W_T(t)$ and Theorem~\ref{main} can be directly applied to prove convergence.

\section{Simulation} \label{Sim}
In this section, we consider an example to validate the result in theorem~\ref{main}. Here we consider a multi-agent system with four agents. The dynamics of the multi-agent system are defined as,
\begin{align*}
\dot{x}_i=u_i\\
\dot{y}_i=v_i
\end{align*}
where $ \left( x_i, y_i\right)  $ denote the coordinates of the $ i^{th} $ agent. The graphical representation (with arbitrary orientation) for the above mentioned multi-agent system is shown in Fig~\ref{draw} with $ g_i^2(t) $ representing the weights corresponding to edge $ e_i $. The incidence matrix defined in section~\ref{Pre} is calculated as, 
\begin{figure}[!htb]
	\centering
		\includegraphics[width = 2.5in, keepaspectratio]{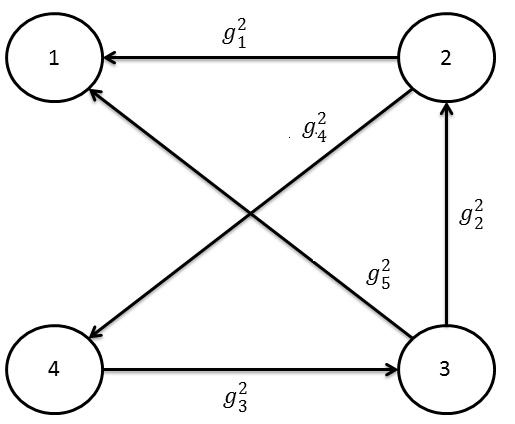}
	\caption{ Information-exchange topologies between the four agents}\label{draw}
\end{figure}
\begin{align*}
D(\mathcal{G})=\begin{bmatrix}
1&0&0&0&1\\
-1&1&0&-1&0\\
0&-1&1&0&-1\\
0&0&-1&1&0
\end{bmatrix}
\end{align*} 
The weight matrix is defined as $ W(t)=\text{diag}[g_1^2(t),g_2^2(t),g_3^2(t),g_4^2(t),g_5^2(t)] $ where $ g_i^2(t)=\left\lbrace \text{square}(t)+1\right\rbrace^2 \text{sin}^2(it) $ for $ i=\left\lbrace 1,2,3,4,5\right\rbrace$ with duty cycles of $ 0.1, 0.2, 0.3, 0.4, 0.5 $ respectively. The weight matrix represents the dynamic changes in the inter agent communication graph. The individual weight, $ g_i^2(t) $ signify the on-off phases in the communication. The initial conditions are chosen as, $ x_0=[0.1,0.1,0.77,0.8]^T $ and $ y_0=[0.8,0.1,0.78,0]^T $. The results obtained from simulations is shown in Fig.~\ref{consen} and Fig.~\ref{conv}. The control law defined in (\ref{eq:mn17}) with control gain $ k=1 $ directs the four agents to move from their initial locations smoothly to the consensus value shown in the inset of Fig.~\ref{consen}. Fig.~\ref{conv} plots the convergence of the spanning edge states along with the exponential envelope. As expected, the evolution of the spanning edge states always lies within the estimated exponential convergence rate envelope.

 \begin{figure}[!htb]
	\centering
		\includegraphics[width = 3.5 in, keepaspectratio]{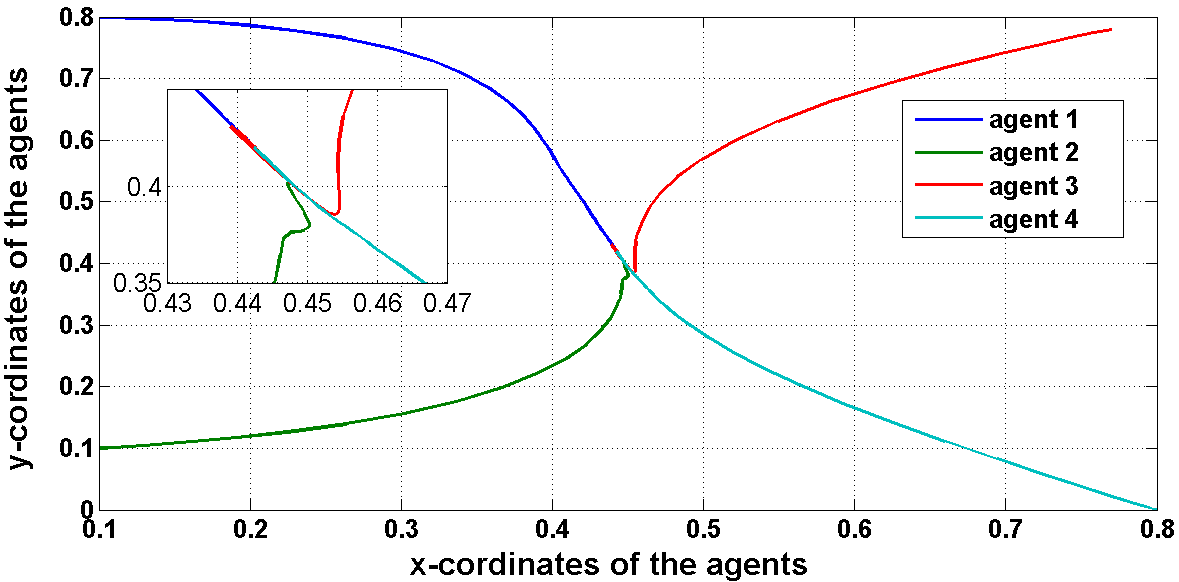}
	\caption{ Resulting state trajectories of the four agents}\label{consen}
\end{figure}

\begin{figure}[!htb]
	\centering
		\includegraphics[width = 3.5 in, keepaspectratio]{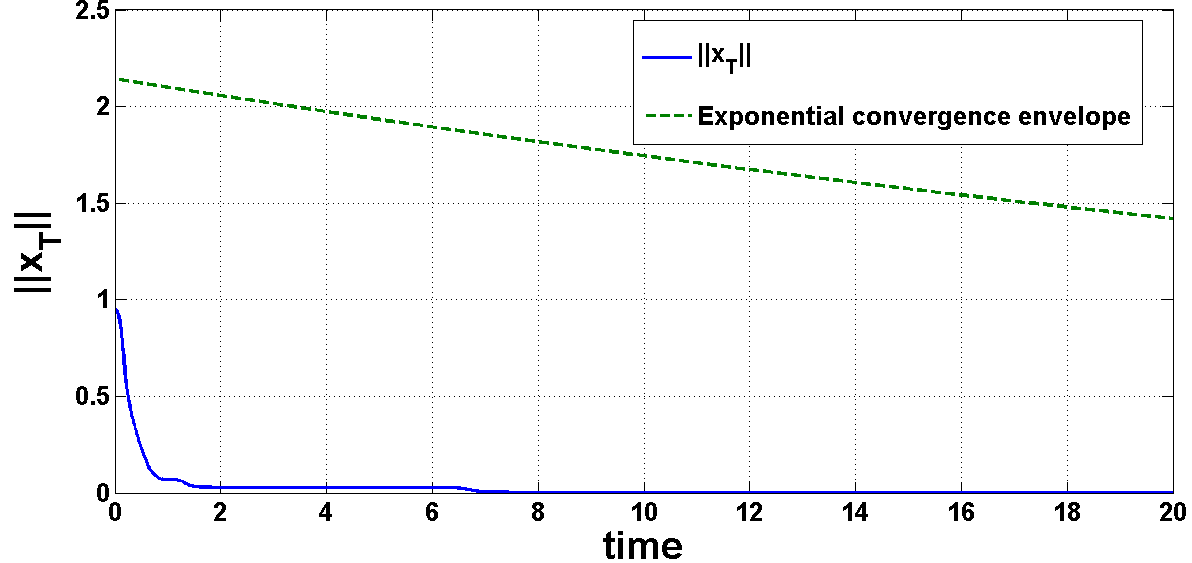}
	\caption{ Convergence of $ ||x_T|| $ with exponential convergence envelope}\label{conv}
\end{figure}
Similarly, we consider the control law defined in (\ref{eq:mn17}) with the aforementioned agent dynamics. The simulation results for the spanning edge vectors are shown in Fig.~\ref{kconv}, with different control gains.
\begin{figure}[!htb]
	\centering
		\includegraphics[width = 3.5 in, keepaspectratio]{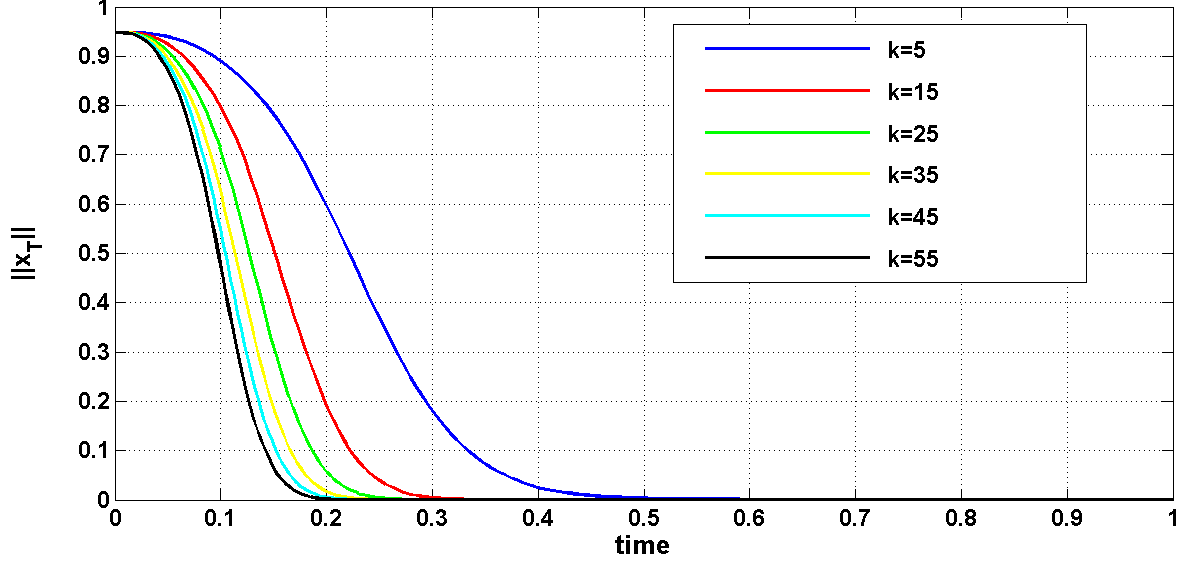}
	\caption{ Convergence of $ ||x_T|| $ with different values of $ k $}\label{kconv}
\end{figure}
The control law defined in (\ref{eq:mn17}) with different control gain allows the spanning edge vectors to converge exponentially to the origin from their arbitrary non-zero initial condition. From the simulation result it can be observed that, the exponential convergence rate increases directly with the scalar gain, for a lower range of values of the design parameter $k$. However this rate saturates for higher values of the gain, as expected from our theoretical analysis.
\section{Concluding Remarks} \label{conc} 
A uniform consensus algorithm for the class of multi-agent systems communicating through diverse inter-agent communication topologies is studied in this work. Time-varying weights are assigned corresponding to each edge which potentially pass through singular phases representing communication dropouts. However, these time-varying weights are assumed to satisfy a persistence of excitation condition. The consensus control law is analyzed by transforming the node agreement problem to an edge agreement one by a suitable coordinate transformation. This allows us to look at a stabilization problem, thus allowing utilization of classical results in adaptive control to prove consensus. The time-dependent control scheme is shown to exponentially stabilize the edge set vector for the multi-agent system with dynamic communication topology. The new analysis technique employing notions of persistence of excitation also helps compute the exponential rate of convergence for the closed-loop dynamics. A modified control law introducing a constant, scalar gain is also studied with the aim of improving convergence rates to consensus. It is observed that, though convergence rates improve for small increase in the gain, large increments in the gain do not arbitrarily push up the rate of convergence to consensus value. Therefore, future work involves exploring whether the rate of convergence to the consensus value can be improved by modifying the control law.
\\

\bibliographystyle{IEEEtranN}
\bibliography{diss}
%

\end{document}